\documentclass[submission,copyright,creativecommons]{eptcs}
\providecommand{\event}{ACT 2025}
\usepackage{graphicx} 
\usepackage{mathtools}
\usepackage{amsmath,amssymb}
\usepackage{amsthm}
\usepackage{tikz-cd,soul}
\usepackage{enumerate}
\usepackage{enumitem}
\usetikzlibrary{decorations.pathmorphing}

\makeatletter
\let\thetitle\@title
\let\theauthor\@author
\let\thedate\@date
\makeatother

\usepackage{mathtools} 
\usepackage{extarrows}

\newcommand{\mN}{\mathbb N}

\numberwithin{equation}{section}
\numberwithin{figure}{section}
\numberwithin{table}{section}

\theoremstyle{plain} 

\newtheorem{theorem}{Theorem}[section]

\newtheorem{corollary}[theorem]{Corollary}
\newtheorem{lemma}[theorem]{Lemma}

\theoremstyle{definition} 
\newtheorem{definition}[theorem]{Definition}

\newtheorem{example}[theorem]{Example}

\theoremstyle{remark} 
\newtheorem{notation}[theorem]{Notation}
\newtheorem{remark}[theorem]{Remark}

\DeclareMathOperator{\pick}{pick}
\DeclareMathOperator{\Flask}{Flask}

\title{First Steps towards Categorical Algebraic Artificial Chemistry}
\author{Joe Pratt-Johns\thanks{School of Computing, Engineering \& the Built Environment, Edinburgh Napier University, Edinburgh, UK} \thanks{Corresponding author} 
\and Toby {St. Clere Smithe}\thanks{Kodamai Ltd}\and Chris Guiver\footnotemark[1] \and Kevin Hughes\footnotemark[1] \and Peter Andras\footnotemark[1]}
\date{February 2025}

\def\titlerunning{First Steps towards Categorical Algebraic Artificial Chemistry}
\def\authorrunning{J. Pratt-Johns, T. Smithe, C. Guiver, K. Hughes, P. Andras}

\begin{document}

\maketitle
 
\begin{abstract}
We construct a functor that gives a dynamics to an algebraic model of interacting components. The construction generalises a computational model of Fontana and Buss in the field of artificial life known as AlChemy, in which molecules and their chemical interactions are emulated by lambda calculus terms and their application and subsequent reduction. We discuss future directions for the application of category theory to algebraic artificial chemistry as an organisational tool, with a focus on formalising the connection between the algebraic and the dynamical facets of such models.
\end{abstract}



\section{Introduction}
As a branch of artificial life, artificial chemistry seeks to construct models of molecules and their interaction that can be simulated in-silico, with the ultimate goal of observing emergent phenomena that are considered life-like. Since the 1990s, there has been success found in basing the rules of interaction for artificial chemistries on algebraic gadgets~\cite{banzhaf1993self,Fontana1994,Kruszewski2022,Rainford2020}. 

We observe that all such \emph{algebraic artificial chemistries} involve building a dynamical system out of an algebraic structure. We set out to describe a small class of such constructions, under the assumption that category theory is the correct language for doing so. Indeed, we are inspired by constructions of dynamical systems from underlying organisational structures in the applied category theory literature, such as the `grey-boxing' functor of Baez {\em et al.} in the context of reaction networks~\cite{Baez2017}.

The genre of algebraic gadgets that we consider are models (or \emph{algebras}) of single-sorted Lawvere theories~\cite{lawvere1963functorial}. Given such a theory~$\mathcal{T}$ and a morphism~$P \in \mathcal{T}$, we construct a functor \[\Flask^{\mathcal{T}}_P : \mathcal{T}\text{Alg} \to \mathsf{Mark},\]
which assigns a Markov process to a~$\mathcal{T}$-algebra. The construction generalises our main motivating example, the Minimal Chemistry Zero of Fontana and Buss~\cite{Fontana1994, Fontana1996}. 

The~$\Flask$ construction is a proof-of-concept for the idea that category theory may be a useful organisational tool in artificial life research, allowing for formal comparisons between models and providing foundations for generic, compositional code-bases for experimentation. We mention three possible future directions to this end in the final section.

{\bfseries Notation:} We write~$\mathbb{N}$ for the natural numbers, with~$0 \in \mathbb{N}$. A multiset over a set~$S$ is a function~$S \to \mathbb{N}$. The set of finitely supported multisets of~$S$ (i.e. non-zero on finitely many elements) is denoted~$\mathbb{N}^S_{\text{fs}}$.

{
\section{Preliminaries}
In this section, we recall the notions of Lawvere theory and Markov process. In doing so, we define the categories~$\mathcal{T}\mathsf{Alg}$ and~$\mathsf{Mark}$ -- the domain and codomain of the~$\Flask$ functor. 

\subsection{Lawvere theories}
Lawvere theories are certain categories that encode the \emph{syntactic} aspect of an algebraic theory, as well as the axioms of that theory that apply to the syntactic forms. For example, the Lawvere theory for groups encodes all operations that can be performed by some parallel and/or sequential composition of group multiplication, taking inverses, and using the group unit, and also all equations between such operations that hold in every group. In this paper we are concerned only with \emph{single-sorted} Lawvere theories, and we omit this qualifier in what follows. For a detailed treatment, the reader is referred to \cite{borceux1994handbook}.

\begin{definition}
    A \emph{Lawvere theory} is a category with finite products and a certain object~$X$ called the \emph{generic object}, such that every other object is isomorphic to~$X^n$ for some~$n \in \mN$.  
\end{definition}

Each morphism~$X^n \to X^m$ in the Lawvere theory corresponds to some operation that can be performed on~$n$ inputs that produces~$m$ outputs. These include the \emph{structural} morphisms afforded by the cartesian structure that are shared by every theory -- for example, the `swap' morphism~$\langle \pi_2, \pi_1\rangle : X^2 \to X^2$.

Very often, Lawvere theories are defined by giving a finite set, or `basis', of non-structural morphisms, together with a finite list of axioms that they satisfy, in the form of commutative diagrams. The rest of the morphisms of the theory are all those generated by the cartesian structure given such a basis, modulo the axioms. 

\begin{definition}
An \emph{algebra} of a Lawvere theory~$\mathcal{T}$ is a product-preserving functor~$\mathcal{T} \to \mathsf{Set}$. The category~$\mathcal{T}\mathsf{Alg}$ has~$\mathcal{T}$-algebras as objects and natural transformations as morphisms.
\end{definition}

A~$\mathcal{T}$-algebra is specified (up to isomorphism) by a set~$|A| = A(X)$, called the \emph{carrier} of the algebra, as well as an~$n$-ary operation~$|A|^n \to |A|$ for each morphism~$X^n \to X$ in~$\mathcal{T}$. 

Lawvere theories and their algebras respectively provide the notions of syntax and semantics that we use in this paper.

\subsection{Markov processes and their morphisms}
For our purposes, a Markov process is a set~$S$ of \emph{states} and a map~$p$ that assigns to each state~$s$ a finitely supported probability distribution~$p(s)$ on~$S$. We imagine that time passes in discrete steps; if we are in state~$s$ at time~$t$, the distribution~$p(s)$ describes, for each~$s' \in S$, the probability of finding ourselves in state~$s'$ at time~$t+1$.

Another way to say this is that a Markov process is a coalgebra of the (finite) distribution monad on~$\mathsf{Set}$. The category~$\mathsf{Mark}$ will have such coalgebras as objects, and coalgebra homomorphisms as morphisms. We give the explicit definition of~$\mathsf{Mark}$ after recalling the distribution monad.

\begin{definition}
The \emph{distribution monad} on~$\mathsf{Set}$, denoted~$D$, is first, a functor~$\mathsf{Set} \to \mathsf{Set}$, defined as follows.
\begin{itemize}
    \item For a set~$S$, define~$D(S)$ to be the set of functions~$d : S \to [0,1]$ such that 
    \begin{itemize}
        \item~$d(s) > 0$ for only finitely many~$s \in S$ and
        \item~$\sum_{s\in S}d(s) = 1$.
    \end{itemize}
    We call such functions (finite) distributions, and often write them as formal sums~$d = \sum_{s\in S}d(s)s$.
    \item For a map between sets~$f : S \to T$, a distribution~$d \in D(s)$, and an element~$t \in T$, define \[D(f)(d)(t) = \sum_{s \in f^{-1}(t)} d(s).\]
    
\end{itemize}
The monad structure is given by the following natural transformations.
\begin{itemize}
    \item The unit,~$\eta : \operatorname{Id} \to D$, has components~$\eta_S : S \to D(S)$ such that~$\eta_S(s)$ maps~$s$ to~$1$ and maps any~$s' \neq s$ to~$0$. 
    \item The multiplication,~$\mu : D^2 \to D$, has components~$\mu_S : D(D(S)) \to D(S)$ given by \[\mu_S(d)(s) = \sum_{e \in D(S)}e(s)\cdot d(e).\]
    \end{itemize}

\end{definition}
The formal sum notation for distributions is a `good' notation for many reasons. For example, by collecting like terms in the sum, we find that 
\[D(f)(d) = \sum_{t \in T}\left(\sum_{s \in f^{-1}(t)} d(s)\right)t = \sum_{s\in S}d(s)f(s).\]
    %
For the rest of the paper, and in particular in the proof of Theorem~\ref{prop1}, we will use the sum notation for distributions.
\begin{definition}
    The \emph{category of Markov processes},~$\mathsf{Mark}$, is named after its objects. That is, an object of~$\mathsf{Mark}$ is a set~$S$ together with a map~$S \to D(S)$. A morphism from~$p : S \to D(S)$ to~$q : T \to D(T)$ is a map of sets~$f : S \to T$ such that~$q \circ f = D(f) \circ p$.
\end{definition}
}
\section{Algebraic Artificial Chemistry}


{ Dittrich {\em et al.}~\cite{Dittrich2001} identify three constituents into which most artificial chemistries (ACs) can be decomposed: a set of \emph{molecule types}, a collection of \emph{reaction rules}, and a \emph{dynamics}.

Given a set of molecule types~$M$, a reaction rule is a formal graphic~$$a_1 + \cdots + a_n \to b_1 + \cdots + b_k$$where each~$a_i, b_j \in M$.  A pair~$(M,R)$ of a set of molecule types and a set of reaction rules is also known in the literature as a \emph{chemical reaction network}. 

The dynamics of the AC comprise a wide variety of choices that the modeller makes as to how to implement the model, on top of the bare chemical reaction network. This may include in what space the molecules live, how their states are described, and how the system evolves in time. Importantly, the evolution in time is always `based on' the reaction rules, in that the reaction rules can informally be considered a parameter of the final implementation.

There are two ways that reaction rules can be defined: explicitly or implicitly. An explicit definition is simply a finite list of rules: all interactions are known by the modeller \emph{a priori}. In this case, the set of molecule types is often simply a set of symbols with no extra structure. Explicit ACs have seen some interface with category theory already, for example in the form of open Petri nets that interpret chemical reaction networks~\cite{Baez2017}.

An implicit definition is possible when there is extra structure on the molecule types, so that a collection of reaction rules may be given by a formula. In this case, not all possible interactions are known \emph{a priori}. Such ACs are of particular relevance to artificial life because they exhibit \emph{open-endedness}; it is possible to define infinitely many reaction rules with one formula. Consider, for example, the coupling of certain long organic polymers; we cannot write down a finite list of all possible reactions of this kind, yet the outcome of any given coupling may be deduced from the reagents. }

An \emph{algebraic} artificial chemistry (AAC) is an implicit artificial chemistry whose underlying class of molecules and reaction rules come from an algebraic structure. For example, a simple AAC has finite strings as its class of molecules, and a non-finite collection of reaction rules given by the formula \[l_1 + l_2 \to \mathsf{concat}(l_1,l_2),\] for each pair of reagent molecules~$(l_1, l_2)$. This is an AAC, because the set of finite strings under concatenation is a monoid, and the reaction rule is constructed from the monoid operation. As yet, this is not a precise definition; indeed, part of the purpose of this paper is to define algebraic artificial chemistries, or at least a small class of them, in concrete terms.

{
We observe that in this case the set of molecules is given by an algebra of the Lawvere theory for monoids, and that the formula for the reaction rules is encoded by the image of the multiplication morphism under this algebra. All that remains is to give the chemistry its dynamics. Our construction,~$\Flask$, is a certain choice for giving an AAC its dynamics, in which the reaction rules are now \emph{formally} a parameter. In the next section, we repeat this observation with a more interesting AAC.}

\section{AlChemy}

Walter Fontana and Leo Buss~\cite{Fontana1996} were early proponents of using algebraic structures to model molecular interaction and, more generally, the interaction of components of complex emergent systems. Their research program, which involves the study of artificial chemistries arising from various models of computation and even proof-theory, has become known as `AlChemy', and has yielded reproducible results in which self-maintaining organisations of artificial molecules emerge due to selection pressure~\cite{Fontana1994, Mathis2024}. {Recent work from the artificial life community demonstrates a renewed interest in AlChemy. Mathis \emph{et al.}~\cite{Mathis2024} have written a Python wrapper for the original code that has allowed them to reproduce the original results, as well as perform new statistical analyses, with increased rigour and clarity. Kruszewski \emph{et al.}~\cite{Kruszewski2022} study emergent self-reproducing metabolisms in an AlChemy-inspired chemistry based on combinatory logic. Folena \emph{et al.}~\cite{folena2024emergence} study the emergence of enzymes in a similar chemistry. We note that ACs in the mould of AlChemy do not attempt to accurately model or make predictions about real chemistry, rather they are an extreme distillation of the ideas of open-endedness and constructive interaction between components, inspired by the chemistry of living systems.} Our motivating example is the simplest model from the original work on AlChemy: Minimal Chemistry Zero (MC0).

\subsection{Definition of MC0}
{The setting for MC0 is called the `reactor'. A state of the reactor is a multiset of untyped lambda calculus terms. At each discrete time step, the state is updated by colliding a random pair of lambda terms and adding the reduction of their application to the reactor. We give the full description in terms of Dittrich {\em et al.}'s~\cite{Dittrich2001} three constituents.
}

\begin{itemize}
    \item \textbf{Molecules}. The set of molecules,~$M$, is the set of terms in the untyped lambda calculus. That is, choose a countable set of variables,~$V$, and define~$M$ to be the smallest set such that 
    \begin{itemize}
        \item~$V \subset M$ (variables).
        \item~$t \in M$ and~$x \in V \implies \lambda x.t \in M$ (abstraction).
        \item~$t_1, t_2 \in M \implies t_1(t_2) \in M$ (application).
    \end{itemize}
    
    Choose also a deterministic operational semantics of the lambda calculus. This is a partial function~$R \subset M \times M$ called \emph{reduction} that will tell us how to evaluate a term to (hopefully) reach a normal form. By setting a limit for the number of reductions to apply before halting, we obtain a function~$E : M \to M$ that repeatedly applies reductions until there is no reduction to apply, or the limit is reached. 
    \item \textbf{Reaction rules}.  The reaction rules for MC0 are described by the formula \[t_1 + t_2 \to t_1 + t_2 + E(t_1(t_2)).\]
    That is, there is a reaction rule of the above form for each~$(t_1,t_2) \in M^2$. {Note that ``$+$" here is still a formal symbol and not part of the lambda calculus syntax.}
    \item \textbf{Dynamics}. To give MC0 a dynamics based on the reaction rules given by the above formula, first initialise a finite collection of molecules, which we will think of as molecules of a fluid in some `reactor', each with some non-zero kinetic energy that allows them to collide at a constant rate. Time progresses in discrete steps; at each step, do the following.
    \begin{itemize}
        \item Choose two molecules,~$t_1$ and~$t_2$, from the reactor uniformly at random, without replacement. These are the molecules which will `collide'.
        \item Apply the above reaction rule. That is, add~$E(t_1(t_2))$ to the reactor.
        \item Choose a third molecule uniformly at random (this may be~$t_1$,~$t_2$, or~$E(t_1(t_2))$) and remove it from the reactor, thus keeping the total number of molecules constant. 
    \end{itemize}
\end{itemize}

\subsection{Generalising the model}
In order to generalise MC0, we must decide what the `story' of the construction is, which characters in the story should be considered parameters, and in what space each of these parameters should reside. One approach to describe the above construction from a general perspective is explained below. From now on, we will adjust our terminology to incorporate a wider range of modelling contexts: `molecules' will become `components', and a formula for reaction rules will become a `protocol'.
\begin{itemize}
    \item \textbf{Components}. Components are given by a syntax and a semantics. The syntax describes what `kind of thing' the components are, and therefore in what ways they may interact. {The MC0 construction may be repeated with the same reaction rules and dynamics for any sort of components that have an interpretation of `interaction' between a pair of components, so this is what we identify as the syntax in Example~\ref{exmpl_MC0}.} The choice of semantics defines the `mechanics of interaction'; it tells us what the components actually are, and what interaction means. In the case of MC0, interaction `means' application and subsequent reduction. In our telling of this story, we  will take the syntax to be a (single sorted) Lawvere theory,~$\mathcal{T}$, and the semantics to be an algebra of~$\mathcal{T}$. 
    \item \textbf{Protocol}. Given a syntax for components, a protocol is a formula for reaction rules that takes the form \begin{align}\label{formula}t_1 + \cdots + t_k \to s_1 + \cdots + s_l,\end{align}where each~$t_i$ is a meta-variable representing a syntactic term and each~$s_j$ is a syntactic term built from the~$t_i$'s. {An equivalent way to write this formula is as an~$l$-tuple of syntactic formulas in the variables~$t_1,..., t_k$}. Hence, when the syntax is given by a Lawvere theory, this information is encoded by a morphism \[P : X^k \to X^l,\]where~$X$ is the generic object of the theory. The protocol describes what happens when components meet -- that is, it describes in what order and combination they interact.
    \item \textbf{Dynamics}. We consider the reactor of MC0 to be a discrete Markov process. The state of the process at each time step is simply the collection of molecules present in the reactor at that time.
\end{itemize}

{\remark{There is a notational wrinkle that needs 
attention here involving the formal symbol ``$+$". First, we consider the rules~$A + B \to C$ and~$B + A \to C$ to be distinct. This is to allow the order of reagents to have an effect on the dynamics -- for example, when concatenating strings, we need to choose which reagent string goes first. However, note that formula~\ref{formula} defines an identical set of rules regardless of the order of the left-hand-side formal sum, because the variables are quantified over every component type. Conversely, the order on the right-hand-side of reaction rules is often irrelevant when it comes to the dynamics, so that~$A \to B + C$ and~$A \to C + B$ may be considered to be the `same' reaction rule. From this point of view, there is a many-to-one correspondence between protocols and sets of reaction rules. This is borne out in the~$\Flask$ construction below by the fact that~$\Flask^{\mathcal{T}}_{\langle P_1,P_2\rangle} = \Flask^{\mathcal{T}}_{\langle P_2,P_1\rangle}$.
}}

\section{The Flask functor}

In this section, we organise the above story of the construction of MC0 and similar models of interacting components into a functor
\[\Flask^{\mathcal{T}}_P : \mathcal{T}\mathsf{Alg} \to \mathsf{Mark}\] 
that defines a Markov process from an algebraic structure. The functor is parametric in both:
\begin{itemize}
    \item the flavour of algebraic structure to which the components belong (a Lawvere theory~$\mathcal{T}$), which we call the \emph{domain};
    \item the \emph{protocol} (previously `formula for reaction rules'), which is a morphism~$P$ from~$\mathcal{T}$.
\end{itemize}

Continuing the analogy of Fontana and Buss, our construction~$\Flask$ puts terms from an arbitrary algebra of an arbitrary algebraic theory into a simple piece of `glassware', mixes them well, and asks what happens as the terms collide at random according to the protocol. One may read `reactor' instead of `flask' for consistency with the Fontana and Buss vocabulary. 

\subsection{Intuition}
Before giving the definition in full generality, we present a natural-language description of the~$\Flask$ construction in the case that the domain,~$\mathcal{T}$, is the Lawvere theory for groups. In this case, a~$\mathcal{T}$-algebra is simply a group. We should also choose a `protocol' -- a morphism in the theory of groups -- which will dictate what happens when a `collision' occurs in the flask. 

A natural choice for~$P$ is~$\star: X \times X \to X$, the morphism that induces the group multiplication. Taking~$P = \star$ should be understood as taking the following formula for reaction rules as the protocol that gives rise to the dynamics in the flask:~$$g_1 + g_2 \to g_1g_2.$$
In other words, when two group elements collide, they are removed from the system, and are replaced by the single term corresponding to their group product.

We may now describe the action of~$\Flask^\mathcal{T}_P$. For a group~$G$,~$\Flask_P^\mathcal{T}(G)$ is a Markov process whose state space is the set of finitely supported multisets of over~$G$. In the chemistry analogy, a state is a collection of `molecules' labelled by elements of~$G$. The transition probabilities can be described by the following probabilistic procedure for generating a new state~$\tau$ from an old state~$\sigma$:
\begin{itemize}
    \item Choose two members of the multiset~$\sigma$, uniformly at random, without replacement. They will be labelled by group elements, say~$g_1$ and~$g_2$.
    \item Define~$\tau$ to be the multiset that results from removing the chosen members from~$\sigma$ and adding a member labelled by the group element~$g_1g_2$. 
\end{itemize}

For a group homomorphism~$\psi : G \to H$, we define~$\Flask^\mathcal{T}_P(\psi)$ to be the pushforward 
\[\Flask^\mathcal{T}_P(\psi) = \psi_* : \mathbb{N}^{G}_{\text{fs}} \to \mathbb{N}^{H}_{\text{fs}},\]
given by, for each~$\tau \in \mathbb{N}^{G}_{\text{fs}}$ and~$h \in H$, 
\[ \psi_*(\tau)(h) := \sum_{g \in  \psi^{-1}(h)}\tau(g).\]
In other words,~$\Flask^\mathcal{T}_P(\psi)$ relabels the members of each multiset by applying~$\psi$. That this map is a morphism of Markov processes~$\Flask^\mathcal{T}_P(G) \to \Flask^\mathcal{T}_P(H)$ is not immediate, and the proof boils down to the fact that~$\psi$ is a homomorphism of groups. 
 
\subsection{The Flask functor in full generality}

We now give a formal definition of~$\Flask^\mathcal{T}_P$, extending to an arbitrary Lawvere theory~$\mathcal{T}$ and morphism~$P \in \mathcal{T}$. First, notations are introduced for dealing with multisets.

\begin{notation}Let~$S$ be a set, and let~$\sigma : S \to \mathbb{N}$ be a multiset of members labelled by~$S$. For an element~$s \in S$, we write~$\sigma_s$ for the multiset that results from removing an element labelled by~$s$, if it is supported. So, 
\[
\sigma_s(t) := \begin{cases}\max\{0,\sigma(t) - 1\} & \text{ when } t = s \\ \sigma(t) & \text{ otherwise.}\end{cases}
\]
Similarly, we write~$\sigma^s$ for the multiset that results from introducing a member labelled by~$s$:~
\[
\sigma^s(t) := \begin{cases}\sigma(t) + 1 & \text{ when } t = s \\ \sigma(t) & \text{ otherwise.}\end{cases}
\]
For an arbitrary number of removals followed by introductions, we write~
\[\sigma_{s_1,\dots,s_n}^{t_1, \dots, t_m} := (\cdots((\cdots(\sigma_{s_1})\cdots)_{s_n})^{t_1}\cdots)^{t_m} \quad \forall \: t_i, s_j \in S.\]
Note that the order of removals and introductions in the above definition matters, because we do not allow negative multiplicities in the definition of~$\sigma_s$.
\end{notation}

\begin{notation}
Let~$\sigma : S \to \mathbb{N}$ be a finitely supported multiset and let~${s}\in S^k$ for~$k \in \mN$. We write~$\pick_{{s}}(\sigma) \in \mathbb{R}$ for the probability of picking~${s}$ by choosing~$k$ members from~$\sigma$ uniformly at random, without replacement. So,
\[ \pick_{(s_1, ..., s_k)}(\sigma) = \frac{\sigma(s_1)\sigma_{s_1}(s_2) \cdots \sigma_{s_1, \dots, s_{k-1}}(s_k)}{N(N-1) \cdots (N - k+1)},\]
where~$N = \sum_{s\in S}\sigma(s)$. When $N = 0$, we set $\pick_{s_1, \dots,s_k}(\sigma) = 0$.
\end{notation}

\begin{definition}\label{def}
Let~$\mathcal{T}$ be a Lawvere theory with generic object~$X$. Let~$P : X^k \to X^l$ be a morphism in~$\mathcal{T}$. The \emph{Flask functor}~$\Flask^\mathcal{T}_P : \mathcal{T}\mathsf{Alg} \to \mathsf{Mark}$ with \emph{domain}~$\mathcal{T}$ and \emph{protocol}~$P$ is defined as follows:
\begin{itemize}
    \item For a~$\mathcal{T}$-algebra~$A$ we define the Markov process \begin{align*} \Flask_P^\mathcal{T}(A) : \mathbb{N}^{|A|}_{\text{fs}} &\to  D(\mathbb{N}^{|A|}_{\text{fs}}) \\ \sigma &\mapsto \begin{cases}\sum_{{a} \in |A|^k} \pick_{{a}}(\sigma)\left[\sigma_{{a}}^{A(P)({a})}\right] & \text{when } l > 0 \text{ and } k \leq \sum_a \sigma(a)\\
    \sum_{{a} \in |A|^k} \pick_{{a}}(\sigma)\left[\sigma_{{a}}\right] & \text{when } l = 0 \text{ and } k \leq \sum_a \sigma(a)\\
    \eta(\sigma) & \text{when } k > \sum_a \sigma(a),
    \end{cases}
    \end{align*}
    whose state space is the set of finitely supported maps~$|A| \to \mathbb{N}$. Recall that since~$A$ preserves products,~$A(P)$ may be considered as a map~$|A|^k \to |A|^l$. 
    \item For a morphism of algebras~$f : A \to B$, we define~$\Flask_P^\mathcal{T}(f)$ to be the pushforward of~$f_X$. That is,\begin{align*}\Flask_P^{\mathcal{T}}(f) := (f_X)_* : \mathbb{N}^{|A|}_{\text{fs}} &\to \mathbb{N}^{|B|}_{\text{fs}} \\ \sigma &\mapsto \left[ b \mapsto \sum_{a \in f_X^{-1}(b)} {\sigma}(a)\right]. \end{align*}
\end{itemize}
\end{definition}
\begin{theorem}\label{prop1}
   ~$\Flask_P^\mathcal{T} : \mathcal{T}\mathsf{Alg} \to \mathsf{Mark}$ is a well-defined functor.
\end{theorem}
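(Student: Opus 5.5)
We need to check three things: that $\Flask_P^\mathcal{T}(A)$ is an object of $\mathsf{Mark}$, that $(f_X)_*$ is a morphism of Markov processes, and that identities and composites are preserved.

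\textbf{Objects and functoriality.} For a $\mathcal{T}$-algebra $A$ and state $\sigma$ with $N=\sum_a\sigma(a)\ge k$, the sum $\sum_{a\in|A|^k}\pick_a(\sigma)$ has finitely many non-zero terms, since $\pick_a(\sigma)=0$ unless every $a_i$ lies in the finite support of $\sigma$. The terms total $1$: they are the probabilities of all ordered outcomes of drawing $k$ members without replacement, which is checked by induction on $k$ using $\sum_s\sigma_{a_1,\dots,a_{j-1}}(s)=N-j+1$. Each resulting state $\sigma_a^{A(P)(a)}$ is again finitely supported, so $\Flask_P^\mathcal{T}(A)(\sigma)\in D(\mathbb{N}^{|A|}_{\text{fs}})$. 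The case $k>N$ gives $\eta(\sigma)$, which is trivially a distribution. Functoriality follows because the pushforward is functorial: $(\mathrm{id})_*=\mathrm{id}$ and $(g_X\circ f_X)_*=(g_X)_*\circ(f_X)_*$, as seen by summing over nested fibres. Also $(g\circ f)_X=g_X\circ f_X$ in $\mathcal{T}\mathsf{Alg}$.

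\textbf{Morphisms.} This is the main step. Let $f:A\to B$, write $\phi=f_X$, and let $q,p$ be the two Markov maps. We must show $q(\phi_*\sigma)=D(\phi_*)(p(\sigma))$. Note that $\phi_*$ preserves total size $N$, so both sides fall into the same case. When $k>N$ both sides equal $\eta(\phi_*\sigma)$. Otherwise the right-hand side is
\[\sum_{a\in|A|^k}\pick_a(\sigma)\,\bigl[\phi_*(\sigma_a^{A(P)(a)})\bigr].\]
I will use two facts.
\begin{itemize}
\item[(i)] Whenever the removals are genuine, that is $\pick_a(\sigma)>0$, we have $\phi_*(\sigma_a^{c})=(\phi_*\sigma)_{\phi(a)}^{\phi(c)}$.
\item[(ii)] We have $B(P)\circ\phi^k=\phi^l\circ A(P)$. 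This is naturality of $f$ at $P$, combined with product preservation, which identifies $f_{X^k}$ with $\phi^k$.
\end{itemize}
These facts rewrite each summand as $\bigl[(\phi_*\sigma)_{\phi(a)}^{B(P)(\phi(a))}\bigr]$. The $l=0$ case is the same argument without the introductions. Grouping the terms by $b=\phi^k(a)$ then reduces the claim to
\[\sum_{a\in(\phi^k)^{-1}(b)}\pick_a(\sigma)=\pick_b(\phi_*\sigma).\]

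\textbf{The obstacle.} This marginalisation identity for sampling without replacement is where I expect the real work to be. I plan to prove it by induction on $k$. For $k=1$ it reads $\sum_{a\in\phi^{-1}(b)}\sigma(a)/N=\phi_*\sigma(b)/N$, which is immediate. For the inductive step, factor $\pick_{(a_1,\dots,a_k)}(\sigma)=\frac{\sigma(a_1)}{N}\pick_{(a_2,\dots,a_k)}(\sigma_{a_1})$. Applying the hypothesis to $\sigma_{a_1}$ and using (i), namely $\phi_*(\sigma_{a_1})=(\phi_*\sigma)_{b_1}$ when $\sigma(a_1)>0$, the inner sum becomes $\pick_{(b_2,\dots,b_k)}((\phi_*\sigma)_{b_1})$, which no longer depends on $a_1$. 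Summing $\sigma(a_1)/N$ over $a_1\in\phi^{-1}(b_1)$ then gives $\phi_*\sigma(b_1)/N$, which closes the induction. One must take care that terms with $\sigma(a_1)=0$ vanish, so that the truncation in $\sigma_s$ causes no trouble.
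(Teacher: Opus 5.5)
Your proposal is correct and follows the paper's proof essentially step for step: your fact (i) is Lemma~\ref{lem}(i)--(iii), fact (ii) is the paper's naturality step, and grouping by fibres of $\phi^k$ reduces everything to the marginalisation identity of Lemma~\ref{lem}(iv), which you prove by an induction via $\pick_{(a_1,\dots,a_k)}(\sigma)=\frac{\sigma(a_1)}{N}\pick_{(a_2,\dots,a_k)}(\sigma_{a_1})$ instead of the paper's informal token-relabelling count; you also check normalisation and the $k>N$ and $l=0$ cases, which the paper omits or leaves implicit. One caveat on the normalisation check: with the paper's convention that $\pick=0$ whenever $N=0$, the degenerate case $k=0$, $\sigma=0$ has total mass $0$ rather than $1$, so one should set $\pick_{()}(\sigma)=1$ for all $\sigma$ --- a defect of the definition rather than of your argument.
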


{The following lemma collects relationships between multisets and pushforwards used in the proof of Theorem~\ref{prop1}.}
{ 
\begin{lemma}\label{lem}
   Let~$f : A \to B$ be a map between sets and let~$\sigma$ be a finitely supported multiset over~$A$. Let~$a \in A$,~$(a_1, \dots a_k) \in A^k$, and~$(b_1, \dots, b_k) \in B^k$. Then
    \begin{enumerate}[label=(\roman*)]
        \item~$f_*(\sigma^a) = (f_*\sigma)^{f(a)}$;
        \item~$f_*(\sigma_a) = (f_*\sigma)_{f(a)}$ whenever~$\sigma(a) \geq 1$;
        \item~$f_*(\sigma_{a_1, ..., a_k}) = (f_*)_{f(a_1), ..., f(a_k)}$ whenever~$\pick_{a_1, ..., a_k}\sigma > 0$;
        \item \[\pick_{b_1, ..., b_k}(f_*\sigma) = \sum_{\substack{a_i \in f^{-1}(b_k)\\i = 1, ..., k}}\pick_{a_1, ..., a_k}(\sigma).\]
    \end{enumerate}
\end{lemma}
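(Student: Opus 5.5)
The plan is to prove the four items in order, each building on the previous one. Items (i) and (ii) are direct pointwise checks from the definition of the pushforward, $f_*\sigma(b) = \sum_{a \in f^{-1}(b)}\sigma(a)$. Item (iii) follows from (ii) by induction. Item (iv) is a combinatorial computation that I would also do by induction on $k$. Note that item (iii) should read $f_*(\sigma_{a_1,\dots,a_k}) = (f_*\sigma)_{f(a_1),\dots,f(a_k)}$, and the sum in (iv) ranges over $a_i \in f^{-1}(b_i)$ for each $i$; I will prove these intended versions.

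\textbf{Items (i) and (ii).} For (i), fix $b \in B$. If $b \neq f(a)$, then no preimage of $b$ equals $a$, so $f_*(\sigma^a)(b) = f_*\sigma(b) = (f_*\sigma)^{f(a)}(b)$. If $b = f(a)$, then exactly one summand, the one indexed by $a$, increases by one, so both sides equal $f_*\sigma(b)+1$.

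For (ii) the argument is identical with $-1$ in place of $+1$. The hypothesis $\sigma(a) \geq 1$ is exactly what makes the truncation $\max\{0,\cdot\}$ inactive on both sides. On the left it is inactive because $\sigma(a)-1 \geq 0$. On the right it is inactive because $f_*\sigma(f(a)) \geq \sigma(a) \geq 1$.

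\textbf{Item (iii).} I would induct on $k$. The condition $\pick_{a_1,\dots,a_k}(\sigma) > 0$ means every factor in the numerator is positive, that is, $\sigma_{a_1,\dots,a_{i-1}}(a_i) \geq 1$ for every $i$. This is exactly the hypothesis needed to apply (ii) at each successive removal, starting from $\sigma_{a_1,\dots,a_{i-1}}$.

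\textbf{Item (iv).} This is the main step. First I would dispose of the degenerate case. Since $f_*$ preserves total mass, $\sum_b f_*\sigma(b) = \sum_a \sigma(a) = N$, so the denominators on the two sides agree, and when $N = 0$ both sides are $0$. For $N \geq 1$ I would induct on $k$. The case $k = 0$ is trivial, and the case $k = 1$ is just the definition of $f_*$ divided by $N$. For the inductive step I would use the recursion
\[
\pick_{s_1,\dots,s_k}(\sigma) = \frac{\sigma(s_1)}{N}\,\pick_{s_2,\dots,s_k}(\sigma_{s_1}),
\]
which holds whenever $\sigma(s_1) \geq 1$ and $N \geq 1$, since the remaining multiset has total mass $N-1$. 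I also need the boundary cases where some quantity vanishes.

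Splitting the right-hand sum of (iv) over the first index gives
\[
\sum_{a_1 \in f^{-1}(b_1)} \frac{\sigma(a_1)}{N} \sum_{a_2,\dots,a_k} \pick_{a_2,\dots,a_k}(\sigma_{a_1}).
\]
Terms with $\sigma(a_1) = 0$ contribute nothing, so I may assume $\sigma(a_1) \geq 1$. The induction hypothesis, applied to the multiset $\sigma_{a_1}$, turns the inner sum into $\pick_{b_2,\dots,b_k}(f_*(\sigma_{a_1}))$. By (ii) this equals $\pick_{b_2,\dots,b_k}((f_*\sigma)_{b_1})$, which no longer depends on $a_1$ and can be pulled out of the outer sum. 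What remains outside is $\sum_{a_1 \in f^{-1}(b_1)} \sigma(a_1)/N = f_*\sigma(b_1)/N$. Reassembling with the recursion for $f_*\sigma$ gives $\pick_{b_1,\dots,b_k}(f_*\sigma)$.

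The main obstacle is this bookkeeping: justifying the recursion for $\pick$ in the edge cases, and keeping track of where the hypothesis of (ii) is needed. There are two such edge cases, both consistent with the convention that $\pick$ of the empty multiset is $0$. When $N < k$, some factor of the numerator eventually vanishes. When $f_*\sigma(b_1) = 0$, the preimage sum is empty.
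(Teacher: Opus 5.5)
Your proofs of (i)--(iii) follow the paper's own route: a pointwise case split on whether $b = f(a)$, then induction on $k$ using (ii), where $\pick > 0$ guarantees $\sigma_{a_1,\dots,a_{i-1}}(a_i) \ge 1$ at each step. You also read the two typos in the statement correctly.

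Your proof of (iv) differs from the paper's. The paper also begins by noting that $f_*$ preserves total mass, so the denominators agree and it suffices to compare numerators. It then argues by double counting. Each token in the bag carries an $A$-label $a$ and a $B$-label $f(a)$. The left numerator counts ordered draws of $k$ distinct tokens with $B$-labels $(b_1,\dots,b_k)$, and the right-hand sum partitions those draws according to their $A$-labels.

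You instead induct on $k$ using the first-draw recursion $\pick_{s_1,\dots,s_k}(\sigma) = \frac{\sigma(s_1)}{N}\pick_{s_2,\dots,s_k}(\sigma_{s_1})$. You apply the induction hypothesis to $\sigma_{a_1}$ and then use (ii) to replace $f_*(\sigma_{a_1})$ by $(f_*\sigma)_{b_1}$, which no longer depends on $a_1$.

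\textbf{What the paper's argument buys.} It is shorter and more transparent, and it keeps (iv) independent of (i)--(iii). However, it is informal: it relies on reading the numerator of $\pick$ as the number of ordered draws of distinct tokens with prescribed labels. Proving that fact is essentially your recursion.

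\textbf{What your argument buys.} It is fully formal and shows exactly where the non-truncation hypothesis $\sigma(a_1) \ge 1$ of (ii) enters. It also forces the degenerate cases into the open. Note that for $1 \le N < k$ the paper's formula for $\pick$ is literally $0/0$. Both proofs therefore tacitly adopt the convention $\pick = 0$ whenever $N < k$. This is harmless, since the main theorem only uses (iv) when $k \le N$.
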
}
\begin{proof} 
    \begin{enumerate}[label=(\roman*).]
    \item Let~$b \in B$. We split into two cases.

    \textbf{Case 1:}~$f(a) = b$. Then \[f_*(\sigma^a)(b) = \sum_{x \in f^{-1}(b)}\sigma^a(x) = \left(\sum_{x\in f^{-1}(b)}\sigma(x)\right) + 1 = (f_*\sigma)_{f(a)}(b).\]

    \textbf{Case 2:}~$f(a) \neq b$. Then \[f_*(\sigma^a)(b) = \sum_{x \in f^{-1}(b)}\sigma^a(x) = \left(\sum_{x\in f^{-1}(b)}\sigma(x)\right) = (f_*\sigma)_{f(a)}(b) \]
    
    \item Suppose~$\sigma(a) \geq 1$. Then~$\sigma_{a}(a) = \sigma(a) - 1$ and~$\sigma_{a}(x \neq a) = \sigma(x)$. Let~$b \in B$. We split into two cases.
    
    \textbf{Case 1:}~$f(a) = b$. Then, using the above characterisation of~$\sigma_a$, we have \[f_*(\sigma_a)(b) = \sum_{x \in f^{-1}(b)}\sigma_a(x) = \left(\sum_{x \in f^{-1}(b)}\sigma(x)\right) - 1 = f_*(\sigma)(b) - 1.\]
    On the other hand, \[f_*(\sigma)_{f(a)}(b) = \max\{0, f_*(\sigma)(b) - 1\} = f_*(\sigma)(b) - 1,\]because~$\sigma(x) \geq 1$ for at least one~$x$ in the preimage of~$b$. 

    \textbf{Case 2:}~$f(a) \neq b$. Then \[f_*(\sigma_a)(b) = \sum_{x : f(x) = b} \sigma_a(x)= \sum_{x : f(x) = b}\sigma(x) = f_*\sigma(b) = (f_*\sigma)_{f(a)}(b).\]
        
    \item We proceed by induction on~$k$. The base case follows from part (ii), since~$\pick_a\sigma > 1 \implies \sigma(a) \geq 1$. In fact, since~$\operatorname{pick}_{a_1, ..., a_k}\sigma > 0 \implies \sigma(a_k) \geq 1$, the induction step also follows from part (ii): 
    \begin{align*} 
    	f_*(\sigma_{a_1, ... , a_k}) & = f_*((\sigma_{a_1, ..., a_{k-1}})_{a_k}) = (f_*\sigma_{a_1, ..., a_{k-1}})_{f(a_k)} = \\
    				     & = (f_*(\sigma)_{f(a_1), ..., f(a_{k-1})})_{f(a_k)} = f_*(\sigma)_{f(a_1), ..., f(a_k)}.
    \end{align*}				     
    We use the induction assumption in the third equality, together with the fact that~$\operatorname{pick}_{a_1, ...., a_{k}}\sigma > 0 \implies\operatorname{pick}_{a_1, ...., a_{k-1}}\sigma > 0$.
    \item It will be enough to show that~$$f_*\sigma(b_1) \times (f_*\sigma)_{b_1}(b_2) \times \cdots \times (f_*\sigma)_{b_1,...,b_{k-1}}(b_k) = \sum_{\substack{a_i : f(a_i) = b_i\\i = 1, ..., k}}\sigma(a_1)\times\sigma_{a_1}(a_2)\times\cdots\times\sigma_{a_1, ..., a_{k-1}}(a_k).$$
    The quickest way to see that this holds is to consider that~$\sigma$ and~$f_*\sigma$ give two different labels to each token in the same bag -- the ``$A$-labels" and the ``$B$-labels". For each token with~$A$-label~$a$, its~$B$-label is~$f(a)$. Both sides of the equation count the number of ways of collecting labels~$(b_1, ..., b_k)$ from the bag by taking tokens one at a time; the left hand side counts directly, and the right hand side counts by summing the number of ways of collecting labels~$(a_1, ..., a_k)$ for each vector of~$A$-labels whose collection also means that we have collected~$B$-labels~$(b_1, ..., b_k)$.
    \end{enumerate}
\end{proof}

\begin{proof}[Proof of Theorem~\ref{prop1}]
    Let ~$f : A \to B$ be a morphism of~$\mathcal{T}$-algebras. We need to show that~$\Flask^{\mathcal{T}}_P(f) = (f_X)_*$ is indeed a morphism of the relevant Markov processes. That is, we need to show that the square below commutes.
\[\begin{tikzcd} A && {} && {\mathbb{N}^{|A|}_{\text{fs}}} && {D(\mathbb{N}^{|A|}_{\text{fs}})} \\ & {} && {} & {} \\ B &&&& {\mathbb{N}^{|B|}} && {D(\mathbb{N}^{|B|})} \arrow["f"', from=1-1, to=3-1] \arrow["{\Flask_P^\mathcal{T}(A)}", from=1-5, to=1-7] \arrow["{(f_X)_*}"', from=1-5, to=3-5] \arrow["{D((f_X)_*)}", from=1-7, to=3-7] \arrow[squiggly, maps to, from=2-2, to=2-4] \arrow["{\Flask_P^\mathcal{T}(B)}"', from=3-5, to=3-7] \end{tikzcd}\]

Let $\sigma \in \mathbb{N}^{|A|}_{\text{fs}}$. The non-trivial case to check is when $k \leq \sum_a \sigma(a)$. We write~$f := f_X$ and~$f^k := f_{X^k}$. By definition, we have
\begin{align*}\left[D(f_*) \circ \Flask_P^{\mathcal{T}}(A) \right] (\sigma) &= \sum_{{a} \in |A|^k}\pick_{{a}}(\sigma) \left[f_*\left(\sigma_{{a}}^{A(P)({a})}\right)\right]; \\
\left[\Flask_P^{\mathcal{T}}(B) \circ f_* \right](\sigma) &= \sum_{{b} \in |B|^k}\pick_{{b}}(f_*\sigma)\left[\left(f_*\sigma\right)_{{b}}^{B(P)({b})}\right].\end{align*}
For brevity, define \begin{align*}\alpha_{{a}} &:= f_*\left(\sigma_{{a}}^{A(P)({a})}\right);\\\beta_{{b}} &:= \left(f_*\sigma\right)_{{b}}^{B(P)({b})}.\end{align*}
{By Lemma~\ref{lem}, we have
\[\alpha_{{a}} = (f_*\sigma)_{f^k(a)}^{f^l \left[ A(P)({a})\right]} = (f_*\sigma)_{f^k({a})}^{B(P)(f^k({a}))}\]
whenever~$\pick_a\sigma > 0$.} The second equality is due to naturality of~$f$. It follows that {when~$\pick_a \sigma > 0$,} \begin{align}\label{implication}f^k({a}) = {b} \implies \alpha_{{a}} = \beta_{{b}}.\end{align}
{Combining the above with Lemma~\ref{lem} part (iv), we have}
\begin{align*}\sum_{{a}\in|A|^k}\pick_{{a}}(\sigma)\alpha_{{a}} 
&= \sum_{{b} \in |B|^k}\left[\sum_{{a} \in (f^k)^{-1}({b})}\pick_{{a}}(\sigma)\alpha_{{a}} \right]\\ 
&\overset{\eqref{implication}}{=} \sum_{{b} \in |B|^k}\left[\sum_{{a} \in (f^k)^{-1}({b})}\pick_{{a}}(\sigma)\right]\beta_{{b}} \\ 
&=\sum_{{b} \in |B|^k}\pick_{{b}}(f_*\sigma)\beta_{{b}}.\end{align*}
Finally, since composition and identities for~$\mathsf{Mark}$ are the same as for~$\mathsf{Set}$, the assignment~$\Flask_P^{\mathcal{T}}$ is a functor by the functorial properties of the pushforward. \qedhere
\end{proof} 

{We can also extend the~$\Flask$ functor to include multiple protocols by using the monad structure of~$D$. 

\begin{definition}
    Given an~$n$-tuple of protocols~$(P_1, \dots, P_n)$, the functor~$\Flask_{P_1, \dots, P_n}^{\mathcal{T}}$ is defined by induction on~$n$, with Definition~\ref{def} as the base case. The induction step of the definition is as follows.
    \begin{itemize}
        \item For a~$\mathcal{T}$-algebra~$A$, the Markov process~$\Flask_{P_1, \dots, P_n}^{\mathcal{T}}(A)$ is given by the composition \[\mathbb{N}_{\text{fs}}^{|A|} \xrightarrow{\Flask_{P_1, \dots, P_{n-1}}^{\mathcal{T}}(A)} D(\mathbb{N}_{\text{fs}}^{|A|}) \xrightarrow{D(\Flask_{P_n}^{\mathcal{T}}(A))} D^2(\mathbb{N}_{\text{fs}}^{|A|}) \xrightarrow{\mu_{\mathbb{N}_{\text{fs}}^{|A|}}} D(\mathbb{N}_{\text{fs}}^{|A|}).\]
        \item~$\Flask_{P_1,\cdots,P_n}^{\mathcal{T}}$ acts on functions identically to the~$n = 1$ case.
    \end{itemize}
\end{definition}

\begin{corollary}
$\Flask_{P_1,\cdots ,P_n}^{\mathcal{T}}$ is a well-defined functor. 
\end{corollary}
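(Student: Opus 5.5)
We argue by induction on $n$. The base case $n=1$ is Theorem~\ref{prop1}. For the inductive step, the action on morphisms is unchanged: $\Flask_{P_1,\dots,P_n}^{\mathcal{T}}(f) = (f_X)_*$. Functoriality (preservation of identities and composition) therefore follows exactly as at the end of the proof of Theorem~\ref{prop1}, from functoriality of the pushforward and the fact that $\mathsf{Mark}$ composes as in $\mathsf{Set}$. So the only thing to check is that $(f_X)_*$ is a morphism of Markov processes $\Flask_{P_1,\dots,P_n}^{\mathcal{T}}(A)\to\Flask_{P_1,\dots,P_n}^{\mathcal{T}}(B)$ for every morphism of algebras $f : A\to B$.

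The key observation is that $\mathsf{Mark}$ is closed under Kleisli composition of coalgebras. Suppose $p,p' : S\to D(S)$ and $q,q' : T\to D(T)$, and $g : S\to T$ satisfies $q\circ g = D(g)\circ p$ and $q'\circ g = D(g)\circ p'$. Then $g$ also intertwines $\mu_S\circ D(p')\circ p$ and $\mu_T\circ D(q')\circ q$. We verify this by the chain
\begin{align*}
\mu_T\circ D(q')\circ q\circ g &= \mu_T\circ D(q')\circ D(g)\circ p = \mu_T\circ D(q'\circ g)\circ p\\
&= \mu_T\circ D(D(g)\circ p')\circ p = \mu_T\circ D^2(g)\circ D(p')\circ p\\
&= D(g)\circ\mu_S\circ D(p')\circ p.
\end{align*}
The steps use, in order: the hypothesis on $q$; functoriality of $D$; the hypothesis on $q'$; functoriality of $D$ again; and finally naturality of $\mu$, namely $\mu_T\circ D^2(g) = D(g)\circ\mu_S$.

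We apply this with the following choices:
\begin{itemize}
\item $g=(f_X)_*$;
\item $p=\Flask_{P_1,\dots,P_{n-1}}^{\mathcal{T}}(A)$ and $q$ its counterpart for $B$, where the intertwining holds by the induction hypothesis;
\item $p'=\Flask_{P_n}^{\mathcal{T}}(A)$ and $q'$ its counterpart for $B$, where the intertwining holds by Theorem~\ref{prop1}.
\end{itemize}
The composite $\mu\circ D(p')\circ p$ is by definition $\Flask_{P_1,\dots,P_n}^{\mathcal{T}}(A)$, and similarly for $B$, which closes the induction.

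No step is a genuine obstacle; the argument is purely formal once Theorem~\ref{prop1} is available. The only point needing care is the naturality of $\mu$ with respect to the pushforward map $(f_X)_*$. This is standard for the distribution monad, and can be checked directly in formal-sum notation: both sides send $\sum_i d_i e_i$ to $\sum_i d_i\sum_s e_i(s)\,g(s)$.
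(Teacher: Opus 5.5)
Your proposal is correct and is essentially the paper's proof: induction on $n$ with Theorem~\ref{prop1} as the base case, and the inductive step obtained by pasting three squares. These are the square from the induction hypothesis, $D$ applied to the square of Theorem~\ref{prop1}, and the naturality square of $\mu$ at $(f_X)_*$. Your equational chain is that diagram chase written out, and your formal-sum check of naturality of $\mu$ supplies a detail the paper leaves implicit.
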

\begin{proof}
    The proof is by induction on~$n$, the base case being Theorem~\ref{prop1}. The induction step follows from naturality of~$\mu$, which ensures that the outer square commutes in the diagram below, since the left and middle squares commute by the induction assumption and Theorem~\ref{prop1} respectively.
\[\begin{tikzcd}
	{\mathbb{N}_{\text{fs}}^{|A|}} && {D(\mathbb{N}_{\text{fs}}^{|A|})} && {D^2(\mathbb{N}_{\text{fs}}^{|A|})} && {D(\mathbb{N}_{\text{fs}}^{|A|})} \\
	\\
	{\mathbb{N}_{\text{fs}}^{|B|}} && {D(\mathbb{N}_{\text{fs}}^{|B|})} && {D^2(\mathbb{N}_{\text{fs}}^{|B|})} && {D(\mathbb{N}_{\text{fs}}^{|B|})}
	\arrow["{\operatorname{Flask}_{P_1, \dots, P_{n-1}}^{\mathcal{T}}(A)}", from=1-1, to=1-3]
	\arrow["{(f_X)_*}"', from=1-1, to=3-1]
	\arrow["{D(\operatorname{Flask}_{P_n}^{\mathcal{T}}(A))}", from=1-3, to=1-5]
	\arrow["{D((f_X)_*)}"', from=1-3, to=3-3]
	\arrow["{\mu_{\mathbb{N}_{\text{fs}}^{|A|}}}", from=1-5, to=1-7]
	\arrow["{D^2((f_X)_*)}"', from=1-5, to=3-5]
	\arrow["{D((f_X)_*)}", from=1-7, to=3-7]
	\arrow["{\operatorname{Flask}_{P_1, \dots, P_{n-1}}^{\mathcal{T}}(B)}"', from=3-1, to=3-3]
	\arrow["{D(\operatorname{Flask}_{P_n}^{\mathcal{T}}(B))}"', from=3-3, to=3-5]
	\arrow["{\mu_{\mathbb{N}_{\text{fs}}^{|B|}}}"', from=3-5, to=3-7]
\end{tikzcd}\qedhere\]
\end{proof}
}
\section{Examples}

We illustrate the Flask functor in several examples. 
\begin{example}\label{exmpl_MC0}
Let~$\mathcal{I}$ be the Lawvere theory with generic object~$X$, generated freely by a single morphism~$\mathsf{interact} : X^2 \to X$. This is the theory whose algebras model components that have some method of \emph{interaction} that produces a new component from two prior components. In general, the order of the prior components in the interaction matters. We recover MC0 as the image of a certain~$\mathcal{I}$-algebra under a certain flask functor.
\begin{itemize}
    \item Let~$P_1$ to be the protocol \[\langle \pi_1, \pi_2, \operatorname{app} \circ (\operatorname{Id}_X \times \pi_{1})\rangle : X^2 \to X^3.\]In the usual notation for reaction rules, this could be written as  \[x_1 + x_2 \to x_1 + x_2 + \mathsf{interact}(x_1,x_2).\]
    \item Let~$P_2$ be the unique `delete' protocol~$X \to 1$. 
    \item Let~$L$ be the~$\mathcal{I}$-algebra given by:
    \begin{itemize}
        \item~$L(X) = M$, the set of lambda terms;
        \item~$A(\mathsf{interact})$ is the map~$(t_1, t_2) \mapsto E(t_1(t_2))$, which computes the application and subsequent reduction of two terms under our chosen reduction scheme.
    \end{itemize}
\end{itemize}

Then the MC0 reactor is described, as a Markov process, by~$\Flask_{P_1,P_2}^\mathcal{I}(L)$.
\end{example}


{ 
Our new description of MC0 teases out three parts that constitute the whole model. The domain~$\mathcal{I}$ establishes the flavour of components we are concerned with: those that interact constructively in pairs. Examples of such components in nature include coupling polymers or sexually-reproducing organisms. Next, the protocols establish the mechanics of certain `collision events' (or simply `events' when the arity is less than 2) that might involve the interaction of pairs of components. The protocol~$P_1$ can be seen as representing something similar to reproduction since the reagents are preserved, in contrast to the coupling of polymers, in which only the new molecule is left over. Finally, the algebra~$L$ provides the semantics. It tells us what the components actually are, and what `interaction' actually means. The semantics for coupling polymers could be something akin to the concatenation of strings, whereas for reproducing organisms it could be very complicated indeed.}

{
Both the protocols and the semantics are defined in terms of the domain, but they are independent of each other. The following two examples keep the domain~$\mathcal{I}$ fixed and modify the protocols and semantics, compared to Example~\ref{exmpl_MC0}.

\begin{example}\label{exmpl_prime}
    Dittrich {\em et al.}~\cite{Dittrich2001} present the \emph{number-division-chemistry} as a paradigmatic example of a constructive implicit chemistry. The `molecules' are natural numbers greater than~$1$, and there is a reaction rule 
    \[a + b \to a + a/b\]
    whenever~$b$ divides~$a$. 

    We can construct a class of similar chemistries with~$\Flask$. Let~$\mathcal{I}$ be as in Example~\ref{exmpl_MC0}, and let~$C$ be the protocol \[\langle \pi_1, \mathsf{interact}\rangle : X^2 \to X^2.\]

    We define a left integral monoid to be a monoid~$M$ such that~$bq = b'q \implies b = b'$ for all~$b, b', q \in M$. The natural numbers greater than~$0$ are a (left) integral monoid. Each integral monoid~$M$ gives rise to an algebra~$A_M$ of~$\mathcal{I}$ by setting~$A_M(X) = M$ and \[A_M(\mathsf{interact})(a,b) = \begin{cases}a/b & \text{when } b | a \\ a & \text{otherwise.}\end{cases}\]
    The number-division-chemistry (although this time with~$1$ included as a molecule), is recovered as $\Flask_C^\mathcal{I}(A_{\mathbb{N}_{\geq 1}})$. Each monoid homomorphism~$f : M \to N$ gives rise to an~$\mathcal{I}$-algebra homomorphism~$\hat f_X : A_M \to A_N$ by~$\hat f_X = f$, and hence a morphism of Markov processes~$\Flask_C^{\mathcal{I}}(\hat f)$. 
    \end{example}}

   We finish with a simple example of model refinement working with the~$\Flask$ construction. In other words, the fact that refinement of the semantics passes to refinement of the final Markov process. Here we see the construction at work in a situation apparently far-removed from `real' chemistry. 

\begin{example}\label{exmpl_library}
    Let~$C$ be as in Example~\ref{exmpl_prime}. This protocol of the interaction domain specifies a setup for modelling simple `communication systems'. The idea is that an interaction between two components is a communication from the first to the second. When we choose an algebra~$A$ of~$\mathcal{T}$, we are choosing a set of possible states~$|A|$ of the components, and a map~$A(\mathsf{interact}) : |A| \times |A| \to |A|$ that describes how the state of a receiver of a communication should be updated based on the state of the sender of the communication. Finally,~$C$ describes what actually happens when a communication takes place: the state of the sender remains the same, and the state of the receiver is updated according to~$A(\mathsf{interact})$.

    Consider a library full of people that we wish to model as a communication system in this way. This means defining an~$\mathcal{I}$-algebra whose carrier set is the set of possible states that the people in the library may take. For example, perhaps~$A(X) = \{l, m_n, m_q\}$, representing the states `librarian', `noisy member', and `quiet member'. We then define 
    \begin{center}
    \begin{tabular}{c|cccccc}
       ~$A(\mathsf{interact})(\text{row},\text{col})$ &~$l$ &~$m_n$ &~$m_q$\\
        \hline
       ~$l$&~$l$ &~$m_q$ &~$m_q$  \\
       ~$m_n$ &~$l$ &~$m_n$ &~$m_n$  \\
       ~$m_q$ &~$l$ &~$m_n$ &~$m_q$  
        
    \end{tabular}
\end{center}
to encode the idea that, for example, when a librarian talks to a noisy member they become quiet, and when a noisy member talks to a quiet member they become noisy. The dynamics of the library, under the assumption that at each time step a random person talks to another random person, is given by the Markov process~$\Flask_C^{\mathcal{T}}(A)$. 

Now suppose we have another~$\mathcal{T}$-algebra modelling the library,~$B$, and a morphism of algebras~$f : A \to B$. Such a morphism represents the fact that~$A$ is a more fine-grained model of communication in the library, which is nonetheless consistent with~$B$. For example,~$B$ could be the model given by~$B(X) = \{l, m\}$ and~$B(\mathsf{interact})(x,y) = y$, which does not distinguish between noisy and quiet members. The morphism of~$\mathcal{T}$-algebras given by~$f_X(l) = l$ and~$f_X(m_q) = f_X(m_n) = m$ is mapped by~$\Flask_C^{\mathcal{T}}$ to a corresponding morphism of the Markov processes arising from each model.
\end{example}
\section{Future directions}
To conclude, we discuss three possible future directions for work that extends, or is at least inspired by, the above construction.

\subsection{Introducing space}
Researchers of artificial life are interested in systems that can be shown to exhibit life-like behaviour. There are many perspectives on what `life-like' could mean in these contexts; one idea, originating from Maturana and Varela in the early days of artificial life research~\cite{maturana1980autopoiesis}, is the notion of \emph{autopoiesis}. 

A useful way to think about an autopoietic system is that it comprises its components and their structural relationship to one another. Then, to be autopoietic means that these relationships give rise to processes of production that both 
\begin{itemize}
    \item produce the components that make up the system (McMullin calls this \emph{closure in production}~\cite{McMullin2004}), and
    \item maintain the structural relationship of said components in opposition to a chaotic environment (McMullin calls this \emph{closure in space}).
\end{itemize}

In computational autopoiesis, the maintenance of the structural relationships of components often takes the form of some kind of physical boundary that is continually repaired by the system~\cite{Bourgine2004,Mcmullin1997}. As it stands, the~$\Flask$ construction cannot produce systems that exhibit autopoiesis interpreted in this way, because the components do not interact in any space expressive enough to facilitate structural relationships or boundaries that could be maintained. Indeed, the original computational model given by Maturana and Varela is not expressible by the~$\Flask$ construction. 

This points to a natural next step, which is to extend our description of algebraic artificial chemistries to construct systems in which components exist and interact with one another in some \emph{space}. Importantly, the spacial relationship of components should influence their dynamics, as it does in all computational models of autopoiesis. 

\subsection{Types and logic}
As it stands, the Flask construction does not generalise the two further `minimal chemistries' proposed by Fontana and Buss: MC1 and MC2~\cite{Fontana1996}. In MC1, molecules are labelled by terms in the simply typed lambda calculus. Interaction is again interpreted by application, but now molecules may only interact when the typing restriction allows. In MC2, molecules are labelled by cut-free proofs in the proof-theory of linear logic. Interaction is modelled by forming a new proof using the cut rule, and subsequently reducing to a cut-free proof using cut elimination.

The choice of single-sorted Lawvere theories as the source of component syntax generalises MC0 in a direction that is different to the direction of the `MC\_' series. In the future, we hope to tell the story of the MC\_ series in general terms by moving away from Lawvere theories, and focusing on the categorical semantics of simply typed lambda calculus and the proof theories of various logics that admit a version of the cut elimination theorem. 
\subsection{Implementation}
An important quality of any artificial chemistry is that it admits a practical computer implementation, allowing for experimentation. Recently, Rainford {\em et al.}~\cite{Rainford2020} proposed a general programmatic framework describing and implementing artificial chemistries called `MetaChem'. 

A long-term goal of our project is to contribute similar tools to the artificial life community. In particular, to use category theoretic formalisations of algebraic artificial chemistry to underpin generic and modular code that allows for experimentation with many different flavours of model all in one place.

\bibliographystyle{eptcs}
\bibliography{references}

{\bfseries Declarations}

There are no competing interests. No generative AI was used in the preparation of this work.

\end{document}